\documentclass{article}
\usepackage[utf8]{inputenc}
\usepackage{indentfirst}
\usepackage{graphicx,tikz, tikz-cd}
\usetikzlibrary{decorations.pathreplacing}
\usetikzlibrary{arrows}
\usepackage{fancyhdr}
\usepackage{hyperref}

\hypersetup{pdftitle={Entanglement},pdfauthor={Lorenzo Panebianco}}
\graphicspath{ {images/} }
\usepackage[T1]{fontenc}
\usepackage{mathrsfs}
\usepackage[english]{babel}
\usepackage{float,floatflt, epsfig}
\usepackage{subfig}
\usepackage{color}
\usepackage{caption}
\usepackage{upgreek}
\usepackage{lipsum} 
\usepackage{latexsym}
\usepackage{mathtools}

\usepackage{amssymb,amsthm,amsmath}
\usepackage{bbm}
\usepackage{amscd,wasysym}
\usepackage{braket}
\usepackage[toc,page]{appendix}
\usepackage{enumerate} 
\usepackage{booktabs}
\usepackage{mathtools}
\usepackage[all]{xy}
\usepackage[margin=1.1in]{geometry}

\newcommand{\vertiii}[1]{{\left\vert\kern-0.25ex\left\vert\kern-0.25ex\left\vert #1 
		\right\vert\kern-0.25ex\right\vert\kern-0.25ex\right\vert}}

\newcommand{\R}{\mathbb{R}}

\newcommand{\hil}{\mathcal H}

\newcommand{\hi}{\mathcal{H}}
\newcommand{\al}{\mathcal{A}}

\newcommand{\M}{\mathcal{M}}

\theoremstyle{plain}
\newtheorem{thm}{Theorem}

\newtheorem*{thm*}{Theorem}
\newtheorem{cor}[thm]{Corollary}
\newtheorem*{cor*}{Corollary}

\newtheorem{lem}[thm]{Lemma} 
\newtheorem*{lem2a*}{Lemma 2A} 
\newtheorem*{lem2b*}{Lemma 2B} 
\newtheorem*{lem2c*}{Lemma 2C} 
\newtheorem{prop}[thm]{Proposition}

\newtheorem*{claim*}{Claim} 

\theoremstyle{definition}

\newtheorem{defn}[thm]{Definition}

\newtheorem*{defn*}{Definition}

\theoremstyle{remark}
\newtheorem{remark}[thm]{Remark}

\title{Entanglement Entropy in CFT and Modular Nuclearity}

\author{{Lorenzo Panebianco}${}^{1}$\footnote{{\tt lorenzopnb@proton.me} , ${}^\triangle${\tt benny.wegener@gmx.de} }, {Benedikt Wegener}${}^{2,3,\triangle}$.
}
\date{\small{
		${}^{1}$Dipartimento di Matematica,  Università di Roma “La Sapienza” \\
		Piazzale Aldo Moro 5, 00185–Roma, Italy \\
		${}^{2}$ Marie Sklodowska-Curie fellow of the Istituto Nazionale di Alta Matematica\\
		${}^{3}$ Dipartimento di Matematica, Universit\`a di Roma “Tor Vergata” \\
		Via della Ricerca Scientifica 1, I-00133 Roma, Italy}\\
}

\begin{document}

	\maketitle

	\begin{abstract}
		
		In the framework of Algebraic Quantum Field Theory, several operator algebraic notions of entanglement entropy can be associated with any pair of causally disjoint spacetime regions $\mathcal{S}_A$  and $\mathcal{S}_B$ with positive relative distance. Among them, the canonical entanglement entropy is defined as the von Neumann entropy of a canonical intermediate type I factor. In this work, we show that the canonical entanglement entropy of the vacuum state is finite for a broad class of conformal nets including the $U(1)$-current model and the $SU(n)$-loop group models. Since previous studies suggest that this finiteness property is related to nuclearity properties of the system, we show that the mutual information is finite in any local QFT satisfying a modular $p$-nuclearity condition for some $0 < p < 1$. A similar finiteness result is established for another notion of entanglement entropy introduced in this paper. We conclude with remarks for future work in this direction. 
	\end{abstract}
	
	{\bf Keywords:} Entanglement, relative entropy, nuclearity
	
	{\bf MSC2020:} primary 81T05, 81T40
	
	
	\section{Introduction} \label{sec:1}

    In classical information theory, Shannon entropy provides a quantitative measure of the information content of a given system’s state. Its quantum analogue, the von Neumann entropy, defines the entropy of a normal state of the system via a density matrix and is closely related to quantum entanglement, a phenomenon that occurs in composite systems when a state is not the product of two independent states on the subsystems. In standard non-relativistic quantum mechanics, each normal state $\varphi$ is represented by a density matrix $\rho$, and the von Neumann entropy is given by
	\[
	S(\varphi)=-\text{tr}\, \rho\log \rho \,.
	\]
This notion of entropy is surely consistent whenever the algebra of observables corresponds to a type I factor $\M \cong B(\hil)$, since the density matrix $\rho$ is uniquely determined by $\varphi$. In contrast, within the axiomatic approach to quantum field theory based on Haag–Kastler nets, it is known under very general assumptions that local algebras of observables are type III von Neumann algebras \cite{fredenhagen1985modular}.  In this setting, normal states cannot always be described by density matrices, and the standard von Neumann entropy becomes ill-defined. Therefore, in order to study entanglement phenomena in local QFT, this structural feature necessitates the introduction of alternative entropy-type functionals adapted to type III algebras. The study of entanglement in quantum field theory, already present in early works such as those of Summers and Werner \cite{summers1985vacuum}, has gained renewed attention in recent years due to its deep connections with topics including area theorems \cite{hollands2018entanglement}, c-theorems \cite{casini2007c}, and quantum null energy inequalities \cite{morinelli2021modular, panebianco2021loop}. \\

	In the study of entropy in local QFT, nuclearity conditions play an important role \cite{hollands2018entanglement,narnhofer1994entropy,otani2018toward}. These conditions are based on the compactness criterion proposed by Haag and Swieca \cite{haag1965does}. This criterion is an abstraction of the insight that a QFT model must have a bound on the number of its local degrees of freedom to show regular thermodynamical behaviour. Based on similar heuristic arguments, Buchholz and Wichmann strengthened this assumption and suggested the first nuclearity condition \cite{buchholz1986causal}, which nowadays is known as \emph{energy nuclearity condition}. Explicitly, let $\mathcal{O} \mapsto \mathcal{A}(\mathcal{O}) \subseteq B(\hil)$ be some local Haag-Kastler net describing a local QFT on the Minkowski space. Denote by $\Omega$ the vacuum vector and by $\omega$  the corresponding vacuum state. One says that the energy nuclearity condition holds if  
	\begin{equation} \label{eq:enc0}
		\Theta_{\beta, \mathcal{O}} \colon \mathcal{A}(\mathcal{O}) \to \hil \,, \quad 	\Theta_{\beta, \mathcal{O}} (a) = e^{- \beta H} a \Omega \,,
	\end{equation}
	is nuclear for any bounded region $\mathcal{O}$ and any inverse temperature $\beta >0$. Here $H = P_0$ is the Hamiltonian corresponding to the time direction $x_0$.  The nuclear norm of \eqref{eq:enc0} can be interpreted as the partition function of the restricted system at some fixed inverse temperature, hence it is natural to expect  such nuclear norms to measure  the entropy of the state. Subsequently, a related nuclearity condition has been found by use of  modular theory \cite{buchholz1990nuclear, buchholz1990nuclear2}. According to this second nuclearity condition, one considers an inclusion $\mathcal{O} \subset \widetilde{\mathcal{O}}$ of spacetime regions and requires the map
	\begin{equation}\label{eq:mnc0}
		\Xi \colon \mathcal{A}(\mathcal{O}) \to \hil \,, \quad \Xi(x) = \Delta^{1/4} x \Omega \,,
	\end{equation}
	to be nuclear, with $\Delta$ the modular operator of the bigger local algebra $\mathcal{A}(  \widetilde{\mathcal{O}})$ with respect to $\Omega$. In this case one speaks of {\em modular nuclearity condition}. If the map \eqref{eq:mnc0} is $p$-nuclear then one will say that the {\em modular $p$-nuclearity condition} is satisfied. If modular $p$-nuclearity holds for some $ 0 < p \leq 1$ then the modular nuclearity condition is satisfied, and if so then it is well known that the {\em split property} holds, namely there exists an intermediate type I factor $ \mathcal{A}(  {\mathcal{O}})  \subset \mathcal{R} \subset \mathcal{A}(  \widetilde{\mathcal{O}})$ \cite{buchholz1990nuclear, buchholz1990nuclear2}.  \\
	
	The split property arises as a technical condition in local QFT that ensures the existence of a natural isomorphism between $\mathcal{A}(  {\mathcal{O}}) \vee \mathcal{A}(  \widetilde{\mathcal{O}})'$ and $\mathcal{A}(  {\mathcal{O}}) \otimes \mathcal{A}(  \widetilde{\mathcal{O}})'$, where $\mathcal{O} \subset \widetilde{\mathcal{O}}$ is an inclusion of spacetime regions. For this reason, it has been studied as an expression of the statistical independence of two spacelike separated regions since its first applications \cite{d1983interpolation, doplicher1984standard}. It is considerably stronger than (Einstein) causality, which is included in the Haag-Kastler axioms \cite{hollands2018entanglement}. While causality expresses the independence of measurement apparatuses located in spacelike separated regions, the split property additionally takes the stage of preparation into account. It signifies that no choice of a state prepared in one region can prevent the preparation of any state in the other region. Among all intermediate type I factors, a {\em canonical intermediate type I factor} $\mathcal{F}$ can be chosen by using standard representation techniques \cite{doplicher1984standard}. When the split property holds, one possible entanglement measure to consider is the von Neumann entropy on the canonical intermediate type I factor $\mathcal{F}$, namely what we call the {\em canonical entanglement entropy} \cite{longo2021neumann}. Formally, the canonical entanglement entropy is given by  
	\[
	E_C(\omega)  = S_\mathcal{F} (\omega)\,.
	\]
	In this work we extend some results of \cite{longo2021neumann} and we show this entanglement entropy to be finite on a wide family of conformal nets including the $U(1)$-current. In order to do so, we first investigate where the proof of \cite{longo2021neumann} can be extended and we then construct a vacuum preserving conditional expectation between canonical intermediate type I factors. \\
	
	More generally, it is a common belief that this  entanglement entropy can be bounded from above  by assuming  some nuclearity property of the system. A result of this type is  strongly suggested by  previous works \cite{longo2021neumann, narnhofer1994entropy, otani2018toward} and shows interesting applications in AdS/CFT contexts as  pointed out in \cite{dutta2021canonical}, where it was referred to as {\em splitting entropy} and it was used to conjecture a bound on the reflected entropy. This type of considerations can be extended to a wide family of entanglement measures defined on a generic  bipartite system $A \otimes B$ like the  {\em mutual information} \cite{hollands2018entanglement}
	\[
	E_I(\omega) = S(\omega \Vert \omega_A \otimes \omega_{B}) \,,
	\]
	where $S(\omega \Vert \varphi)$ stands for the Araki's relative entropy between two states $\omega $ and $ \varphi$ \cite{araki1976relative,araki1977relative}. The mutual information quantifies how much knowledge on the second system one can gain from measuring the first one. Here we prove a relation between the mutual information and the modular $p$-nuclearity condition. \\

	The paper is organised as follows. 	In \autoref{sec:2} we collect various notions about modular theory, von Neumann algebras inclusions, quantum entropy, and we describe a few preliminary results such as Theorem \ref{thm:jones}.  In \autoref{sec:3} we  apply the theory of standard representations to construct a cpu map between canonical intermediate type I factors, with a focus on twisted-local nets on the circle. We then use this construction to extend the results of \cite{longo2021neumann} and show the finiteness of the canonical entanglement entropy on a wide family of conformal nets. This section contains the main results: not only the main theorem, namely Theorem~\ref{thm:main}, is of interest in the study of entanglement measures on conformal nets, but the techniques of Theorem~\ref{thm:conditional} can also be of mathematical interest for other issues concerning conformal nets. The mathematical context of \autoref{sec:4} explicitly refers neither to the canonical entanglement entropy nor to conformal nets, but nonetheless it has been motivated by the results of \autoref{sec:3} and related conjectures. Indeed, it is mathematically reasonable to expect the results of Theorem~\ref{thm:main} to still hold on a wider family of conformal nets satisfying some nuclearity property such as the trace-class condition  \cite{longo2021neumann}, and to support this conjecture we prove some similar statement by using techniques of \cite{hollands2018entanglement}. More precisely, we provide explicit upper estimates of the  mutual information and of one more entanglement measure inspired by \cite{otani2018toward} by assuming the modular $p$-nuclearity of the system, with $0 < p < 1$. In \autoref{sec:5} we  add a few little remarks concerning area laws in CFT by following \cite{hollands2018entanglement}  and we apply the  results of \autoref{sec:4}  to a family of  $1+1$-dimensional  models with factorizing S-matrices constructed in \cite{lechner2008construction} in order to investigate the asymptotic behaviour of different entanglement measures as the splitting distance between two causally disjoint wedges diverges. To conclude, in \autoref{sec:6} we present a few remarks that might be useful for future research in this direction.

		\section{Mathematical background} \label{sec:2}
		
			In this section we review some basic notions about modular theory, von Neumann algebras inclusions, relative entropy and other entropy-like state functionals of interest. We also illustrate a few preliminary results achieved on the way. 
	
		\subsection{Von Neumann algebras inclusions} \label{subsec:2.1}

	We begin with a brief overview on relative modular operators. These considerations can be found in the Appendix of \cite{araki1982positive}, or in more recent works like \cite{ceyhan2020recovering} or \cite{ohya2004quantum}. \\
    
    Let $\mathcal{M}$ be a von Neumann algebra in standard form on a Hilbert space $\hi$. We will denote by $\mathcal{P}^\natural$ the natural cone and by $J$ the modular conjugation, and we recall that $J$ is an anti-unitary involution that satisfies $J\zeta=\zeta$ for any $\zeta$ in $\mathcal{P}^\natural$. Let $\varphi$ and $\psi$ be two normal positive linear functionals on $\mathcal{M}$ uniquely represented by the vectors $\xi$ and $ \eta$ of $\mathcal{P}^\natural$. Denote by $s(\varphi)=[\M' \xi]$ and $s(\psi) = [\M' \eta]$ the central supports of $\varphi$ and $\psi$, respectively.   We define the {\em Tomita relative operator}
	\[
	S^0_{\xi, \eta}(x \eta + \zeta) = s(\psi)x^* \xi \,, \quad x \in \M \,, \; \zeta \in [\M \eta]^\perp \,.
	\]
	This conjugate-linear operator is densely defined and closable. Its closure $S_{\xi, \eta}$ admits a polar decomposition
	\[
	S_{\xi, \eta} = J \Delta_{\xi, \eta}^{1/2} \,,
	\]
    where $\Delta_{\xi, \eta}$ is called the {\em relative modular operator}.  In the case $\xi = \eta$, we will write $S_\xi = S_{\xi, \xi}$ and $\Delta_{\xi} = \Delta_{\xi, \xi}$. Furthermore, since vectors in $\mathcal{P}^\natural$ are cyclic if and only if also separating, if $\varphi$ is faithful then $\xi$ is cyclic, and if so then 
	\begin{equation} \label{eq:natural}
	\mathcal{P}^\natural =  \overline{ \big\{ \Delta^{1/4}_\xi  x^*x \xi \,, \, x \in \mathcal{M} \big\} }  \,.
	\end{equation}
	
		Consider now an inclusion $\mathcal{N} \subseteq \mathcal{M}$ of $\sigma$-finite von Neumann algebras and denote by $\omega$ a normal faithful state on $\M$. Our first result is a characterization of any such inclusion admitting an $\omega$-preserving conditional expectation $\varepsilon \colon \M \to \mathcal{N}$. A well known necessary and sufficient condition has been provided by Takesaki (\cite{takesaki2013theory}, Theorem IX.4.2). However, if such conditional expectation exists, then it also satisfies the following structure theorem, which is a result achieved by partially extending  Proposition 3.1.4 and Proposition 3.1.5 of \cite{jones1983index}. 
	
	\begin{thm} \label{thm:jones}
		Let $\mathcal{M}$ be a von Neumann algebra in standard form on  $\hil$. Let $\mathcal{N} $ be a von Neumann subalgebra of $\M$, $\omega$ a normal faithful state of $\M$ and  $\xi$  the unique vector in the natural cone representing $\omega$. If there exists an $\omega$-preserving conditional expectation $\varepsilon \colon \M\to \mathcal{N}$, then the projection $e$ onto $[\mathcal{N} \xi ]$  is the unique  projection in $\mathcal{N}'$ such that

		(i)  $ex \xi = \varepsilon(x)\xi$ and $exe = \varepsilon(x)e$ for $x$ in $\M$,
		
		(ii) $\mathcal{N}e = e \mathcal{M}e$, 
		
		(iii) $\mathcal{N}' = \mathcal{M}' \vee e$.
  
		Furthermore, if we extend $\omega$ by using its representing vector $\xi$ then
		
		(iv)   there is an $\omega$-preserving  $*$-isomorphism $\phi \colon \mathcal{N}e \to \mathcal{N}$ such that $\varepsilon (x)= \phi (exe)$ for $x$ in $\M$. 
	\end{thm}
	\begin{proof}
(i) Since $\xi$ is standard for $\mathcal{M}$ and $\varepsilon$ is an $\omega$-preserving conditional expectation, it can be shown that the map $e(x \xi) = \varepsilon(x) \xi$ naturally extends to an orthogonal projection onto $[\mathcal{N} \xi]$. It is then easy to notice that $e$ belongs to $\mathcal{N}'$, and the identity $exe = \varepsilon(x)e$  follows. The uniqueness of $e$ follows by construction. (ii) This relation is a corollary of the first point.  (iii) Since $e$ belongs to $\mathcal{N}'$, this follows by applying the commutant to both sides of equation $\mathcal{N}e = e \mathcal{M} e$. (iv) We want to define $\phi$ as the inverse of the map $y  \mapsto ye$ for $y$ in $\mathcal{N}$. To prove this map to be an isomorphism, it suffices to show its injectivity. But if $ye=0$ then $ey^*=0$, hence $\omega(y^*y)=0$ which implies $y=0$ by faithfulness. Finally, the state $\omega (\cdot)=(\xi| \cdot \xi)$  satisfies the identity $\omega(exe) = \omega \cdot \phi(exe)$ for $x$ in $\mathcal{N}$. The theorem is proved.
	\end{proof}
	
	\begin{remark}
	If $\omega $ is not faithful, then the previous theorem still holds if $\mathcal{N}$ is assumed to be a von Neumann subalgebra of the reduced algebra $\M_{s(\omega)}=s(\omega) \M s(\omega)$.
	\end{remark}
	
	\begin{cor} \label{cor:natural}
		Let $\M$, $\mathcal{N}$, $\omega $, $\xi$ and $\varepsilon$ be as in the previous theorem. If $\mathcal{P}^\natural_\M $ and $\mathcal{P}^\natural_{\mathcal{N}e} $ are the natural cones of $\M$ and $\mathcal{N}e$ respectively, then
		\[
		e \, \mathcal{P}^\natural_\M  = \mathcal{P}^\natural_{\mathcal{N}e} \subseteq \mathcal{P}^\natural_\M \,.
		\] 
		 In particular, the elements of  $\mathcal{P}^\natural_{\mathcal{N}e} $ correspond to the $\varepsilon$-invariant normal positive functionals of $\M$.
	\end{cor}
\begin{proof}
	We first recall that, by construction,  $\xi$ is a standard vector for $\mathcal{N}e$ on $e \hil$, hence $\mathcal{N}e$ is on standard form on $e \hil$ 
	and it admits a natural cone $\mathcal{P}^\natural_{\mathcal{N}e} \subseteq e \hil$. Let $S = J \Delta^{1/2}$ be the modular operator of $\M$ with respect to $\xi$.  By point (iv) of the previous theorem we have $\varepsilon(x^*) = \varepsilon(x)^*$ and hence $eS = Se$, which implies that $e \Delta = \Delta e$ is the modular operator of $\mathcal{N}e$. We now notice that, since $\omega$ is faithful, its central support is unital and $S$ has dense domain. This implies the identity $eJ=Je$, and by uniqueness of the polar decomposition $Je$ is the modular conjugation of $\mathcal{N}e$. The thesis follows by equation \eqref{eq:natural}, where the last claim is a direct consequence of the uniqueness of the representative vector.
\end{proof}

		\subsection{Quantum entropy} \label{subsec:2.2}
		
		Relative entropy is a useful algebraic tool which provides a non-commutative generalization of the classic Kullback-Leibler divergence and which measures the difference of two states. The first works on the relative entropy are due to Araki \cite{araki1976relative,araki1977relative}. Here we mainly follow \cite{ohya2004quantum} since it also contains other results concerning von Neumann's entropy and conditional entropy. \\

	Let $\varphi$ and $\psi$ be two normal positive functionals on $\M$ as above. The {\em relative entropy} between $\varphi$ and $\psi$ is defined by
	\begin{equation} \label{eq:re}
		S(\varphi \Vert \psi)  =- \int_0^1 \log \lambda \, d(\xi | E_{\eta, \xi}(\lambda) \xi ) - \int_1^\infty  \log \lambda \, d(\xi | E_{\eta, \xi}(\lambda) \xi )  
	\end{equation}
	if  $s(\varphi) \leq s(\psi)$, otherwise $S(\varphi \Vert \psi) = + \infty$ by definition. Here $ \{E_{\eta, \xi}(\lambda) \}_{\lambda \in \R}$ is the spectral family associated to $\Delta_{\eta, \xi}$. Notice that the second integral in \eqref{eq:re} is always finite by the estimate $\log \lambda  \leq \lambda$, hence
	the r.h.s. of \eqref{eq:re} is well defined and by the spectral theorem  $S(\varphi \Vert \psi)$ is finite if and only if $\xi$ belongs to the domain of $|\log \Delta_{\eta, \xi}|^{1/2} $. In particular, if $\xi$ belongs to the domain of $\log \Delta_{\eta, \xi}$ then 
		\begin{equation} \label{eq:re2}
		S(\varphi \Vert \psi)  = -(\xi|\log \Delta_{\eta, \xi} \xi) \,.
	\end{equation}	
Araki's relative entropy is a well defined state functional, namely it does not depend on the choice of the representing vectors. We now recall some properties of the relative entropy \cite{ohya2004quantum}.

	\begin{itemize}
		\item[(r0)] $S(\varphi \Vert \psi)  \geq 0$ if $\varphi$ and $\psi$ are both states, with $S(\varphi \Vert \psi) = 0$ if and only if $\varphi = \psi$.
		
		\item[(r1)] $S(\varphi \Vert \psi) $ is jointly convex. 
        	
        \item[(r2)] $S(\varphi \Vert \psi) \leq S(\varphi \Vert \omega)$ if $\psi \geq \omega$, and $S(\varphi \Vert \lambda \psi) = S(\varphi \Vert \psi) - \log \lambda$ for $\lambda>0$.
		
		\item[(r3)] $S(\varphi \Vert \psi)$ is monotone increasing with respect to inclusions of von Neumann algebras. 
		
		\item[(r4)] Let  $(\M_i)_{i }$ be  an increasing net of von Neumann subalgebras of $\M$ with the property  $(\cup_i \M_i)'' = \M$. Then the increasing net $ S_{\M_i}(\varphi \Vert \psi)  $ converges to $S(\varphi \Vert \psi) $. 
		
		\item[(r5)] Let $\varphi$ be a normal state on the spatial tensor product  $\M_1 \otimes \M_2$ with partials $\varphi_i = \varphi \vert_{\M_i}$. Given  normal states $\psi_i$  on $\M_i$, we have $S(\varphi \Vert \psi_1 \otimes \psi_2) = S(\varphi_1 \Vert \psi_1 ) + S(\varphi_2  \Vert \psi_2 ) + S(\varphi \Vert \varphi_1 \otimes \varphi_2)$. 
	\end{itemize}

	The relative entropy can be extended to any  $C^*$-algebra $A$ by using its universal representation $\pi$. Indeed, since every state $\omega$ of $A$ admits a normal extension $\tilde{\omega} $ to $A^{**}\cong \pi(A)''$, then for states $\varphi$ and $\psi$ of $A$ one simply defines $S(\varphi \Vert \psi) = S(\tilde{\varphi} \Vert \tilde{\psi})$.  This generalization leads us to the following definition.

	\begin{defn}
		If $\varphi$ is a state on a $C^*$-algebra $A$, then the {\em von Neumann entropy} of $\varphi$ is defined by
		\begin{equation} \label{eq:vne}
			S_A(\varphi) = \sup\Big\{  \sum_i \lambda_i S(\varphi_i \Vert \varphi) \colon \sum_i \lambda_i \varphi_i  = \varphi \Big\} \,,
		\end{equation}
		where the supremum is over all decompositions of $\varphi$ into finite (or equivalently countable) convex combinations of other states. If $A$ is implicit from the context, then we will simply write $S_A(\varphi) = S(\varphi)$. 
	\end{defn}
	Some properties of $S(\varphi)$ are immediate from those of the relative entropy: it is non-negative and it vanishes if and only if $\varphi$ is a pure state. The von Neumann entropy is a concave and strongly subadditive state-functional as described in \cite{ohya2004quantum}. On type I factors, the von Neumann entropy of a normal state $\varphi$ with density matrix $\rho$ is given by  $S(\varphi) = - \text{tr} \, \rho \log \rho$. As a corollary, a simple computation shows that for a normal state $\varphi_1 \otimes \varphi_2$ on $B(\hil_1) \otimes B(\hil_2)$ we have 
	\begin{equation} \label{eq:tensor}
		S(\varphi_1 \otimes \varphi_2) = S(\varphi_1) + S(\varphi_2) \,.
	\end{equation}

	\begin{defn}
		Consider an inclusion of $C^*$-algebras $ A \subseteq B$ and a state $\varphi$ on $B$. The {\em entropy of $\varphi$  with respect to $A$}, also called the {\em subalgebra entropy of $A$ with respect to $\varphi$}, is defined as 
		\begin{equation} \label{eq:ce}
			H_\varphi^B(A) = \sup \Big\{ \sum_i \lambda_i S_A(\varphi_i  \Vert \varphi )  \colon \varphi = \sum_i \lambda_i \varphi_i \Big\} \,,
		\end{equation}
		where the supremum is over all finite (countable) convex decompositions $\varphi = \sum_i \lambda_i \varphi_i$ on $B$. If the bigger $C^*$-algebra $B$ is clear from the context, we will briefly use the notation $H_\varphi^B(A) = H_\varphi(A) $.
	\end{defn}
	The subalgebra entropy \eqref{eq:ce} generalizes the von Neumann entropy \eqref{eq:vne} and is  a particular case of what is known as conditional entropy \cite{connes1987dynamical, ohya2004quantum}. We list a few of its basic properties. 
	\begin{itemize}
		\item[(s0)] $H_\varphi^{{B_2} }(A_1) \leq H_\varphi^{B_1}({A_2} )$ if $A_1 \subseteq {A_2} \subseteq B_1 \subseteq {B_2} $.
		
		\item[(s1)]  $ \lim_i H_\varphi^B(A_i) = H_\varphi^B(A) $ if $(A_i)_i$ is an increasing net of $C^*$-subalgebras of $B$ with union norm dense in $A$.
		
		\item[(s2)] $\lambda H^B_{\varphi_1} (A) + (1 - \lambda) H^B_{\varphi_2} (A) \leq H^B_{\varphi} (A) \leq  \lambda H^B_{\varphi_1} (A) + (1 - \lambda) H^B_{\varphi_2} (A)  + \eta(\lambda) + \eta(1-\lambda)  $ for $\varphi = \lambda \varphi_1 + (1- \lambda) \varphi_2$ on $B$ and $\lambda$ in $(0,1)$, where $\eta(t)=-t \log t$.
	\end{itemize}

	Statement (s1) still holds if all the $C^*$-algebras are replaced with von Neumann algebras whenever the union $\cup_i A_i$ is strongly dense in $A$  and  the state $\varphi$ is normal. We note that the concavity of $H_\varphi(A)$ mentioned in (s2) certainly holds whenever $A$ is AF (\cite{ohya2004quantum}, Theorem 5.29 and Proposition 10.6), but the general case is a bit unclear to the authors \cite{connes1987dynamical}. What is clear  instead,	is the  following   original simple lemma which says whenever the  inequality (s0) reduces to  an equality in the case $A_1 = {A_2}$.

	\begin{lem} \label{lem:inequality}
		Consider the $C^*$-algebras inclusions $A_1 \subseteq B_1 \subseteq {B_2}$ and $A_1 \subseteq {A_2} \subseteq {B_2}$. Let  $\varphi$ be a state on ${B_2}$. If there is a $\varphi$-preserving conditional expectation $\varepsilon \colon {B_2}  \to {B_1} $, then 
		\[
		H_\varphi^{B_1}(A_1) \leq H_\varphi^{{B_2}}({A_2}) \,.
		\]
	\end{lem}
	
	\begin{proof}
		We  follow Proposition 6.7 of \cite{ohya2004quantum}. If  $\psi$ is a state of $B_1$ then $\psi \cdot \varepsilon $ is a state of ${B_2}$. Therefore, if  $\varphi = \sum_i \lambda_i \varphi_i $ on $B_1$ for some states  $\varphi_i$ of  $B_1$ then $\varphi = \sum_i \lambda_i \varphi_i \cdot \varepsilon $  is a decomposition of $\varphi$ into states of ${B_2}$. The rest follows from $S_{A_1}(\varphi_i \Vert \varphi) \leq S_{{A_2}}(\varphi_i \cdot \varepsilon  \Vert \varphi) $ which is a consequence of (r3).
	\end{proof}

\begin{cor} \label{cor:inequality}
	Consider an inclusion of $C^*$-algebras $A \subseteq B$ and a state $\varphi$ on $B$. If there is a $\varphi$-preserving conditional expectation $\varepsilon \colon {B}  \to {A} $, then $S_A(\varphi) = H_\varphi^{{B} }(A) $ and $S_A(\varphi) \leq S_B(\varphi)$.
\end{cor}
\begin{proof}
    The first identity is straightforward from property (s0) and the previous lemma, and the second inequality follows.
\end{proof}

\section{Entanglement entropy in chiral CFT} \label{sec:3}

We begin with a few definitions. Let $\mathcal{K}$ be the family of all the open, nonempty and non dense intervals of the circle.  For $I$ in $\mathcal{K}$, $I'$ denotes the interior of the complement. The M\"obius group $\text{M\"ob} $ acts on the circle by linear fractional transformations. A {\em M\"obius covariant net} $(\mathcal{A}, U, \Omega)$ consists of a family $\{\mathcal{A}(I) \}_{I \in \mathcal{K}}$ of von Neumann algebras acting on a complex Hilbert space $\hil$, a strongly continuous unitary representation $U$ of M\"ob and a vector $\Omega$ in $\hil$, called the {\em vacuum vector}, satisfying the following properties  \cite{d2001conformal}:
\newline 

(a0) $\al(I_1) \subseteq \al(I_2)$ if $I_1 \subseteq I_2$ (isotony),

(a1) $U(g) \al(I) U(g)^* = \al(g.I)$ for every $g $ in M\"ob and $I$ in $\mathcal{K}$ (M\"obius covariance),

(a2)  the representation $U$ has {\em positive energy}, namely the generator of rotations has non-negative spectrum (positivity of the energy),

(a3) $\Omega$ is cyclic for the von Neumann algebra $ \bigvee_{I \in \mathcal{K}} \al(I)$, and up to a scalar $\Omega$ is the  unique M\"ob-invariant vector of $\hil$ (vacuum). \\

A M\"obius covariant net is said to be {\em twisted-local} if the following axiom is satisfied:  \\

(a4) there exists a self-adjoint unitary operator $Z$ commuting with the representation $U$, such that $Z\Omega = \Omega $ and  $Z \al(I')Z^* \subseteq \al(I)'$   (twisted-locality).\\

In the following, twisted-local M\"obius covariant nets will be briefly referred to as {\em twisted local nets}. A twisted-local net is said to be {\em local} if $Z$ is the identity. Two consequences of the axioms (a0)-(a3) are  \cite{d2001conformal} \\

(a5) $\Omega$ is cyclic and separating for each $\al(I) $  (Reeh-Schlieder property),

(a6) $\al(I) \subseteq \bigvee_{\alpha} \al(I_\alpha)$ if $I \subseteq \bigcup_\alpha I_\alpha$ (additivity), \\

while if we also assume (a4) then we have   \cite{d2001conformal} \\

(a7) $\al(I') = Z\al(I)'Z^*$ for every $I$ in $\mathcal{K}$ (twisted-duality),

(a8) if $I_+$ is the upper half of the circle and $\Delta$ is the modular operator associated to $\al(I_+)$ and $\Omega$, then for every $t$ in $\R$ we have 
\begin{equation} \label{eq:BW}
	\Delta^{it} = U (\delta_{-2 \pi t}) \,,
\end{equation}
where $\delta$ is the one-parameter group of dilations (Bisognano-Wichmann). \\

Finally, the last axiom we will need is the following:\\

(a9) if ${I} $ and $ \tilde{I}$ are two open intervals verifying $\bar{I} \subset \tilde{I}$, then there exists a type I factor $\mathcal{R}$ such that $\al(I) \subset \mathcal{R} \subset \al(\tilde{I})$ (split property). \\

By axiom (a5) every intermediate type I factor $\mathcal{R}$ is separable, hence the split property ensures the separability of the underlying Hilbert space $\hil$.  In the  local nets setting, it can be proved that the split property is automatic in the case of {\em conformal nets}   \cite{morinelli2018conformal}. As described in the next section, between all intermediate type I factors a  {\em canonical intermediate type I factor} can be uniquely defined. More precisely, if $J$ is the modular conjugation of the von Neumann algebra $\mathcal{A}(I) \vee \mathcal{A}(\tilde{I})'$, which is in standard form by the Reeh-Schlieder property,  then the canonical intermediate type I factor can be proven to be the unique $J$-invariant intermediate type I factor \cite{doplicher1984standard}. Explicitly, such a type I factor is given by 
\begin{equation} \label{eq:canononical-intermediate}
    \mathcal{F} = \mathcal{A}(I) \vee J \mathcal{A}(I)J = \mathcal{A}(\tilde{I}) \cap J \mathcal{A}(\tilde{I}) J \,.
\end{equation}

 \begin{defn}
     Let $(\mathcal{A}, U, \Omega)$ be a twisted-local net on a Hilbert space $\hil$. We call a family $\mathcal{B} = \{  \mathcal{B}(I) \}_{I \in \mathcal{K}}$ of von Neumann subalgebras $\mathcal{B}(I) \subseteq \mathcal{A}(I) $ a {\em subnet of $\mathcal{A}$} if  it satisfies isotony and M\"obius covariance with respect to $U$.  We will use the notation $\mathcal{B} \subseteq \mathcal{A}$ to denote a subnet  $\mathcal{B} $ of $ \mathcal{A}$.
 \end{defn}

  An important feature about subnets is the existence of a family of conditional expectations between local algebras. More precisely,  for each interval $I$ there is a canonical vacuum-preserving faithful normal conditional expectation $\varepsilon_I \colon \mathcal{A}(I) \to \mathcal{B}(I)$ by Bisognano-Wichmann for $\mathcal{A}$, M\"obius covariance for $\mathcal{B}$ and Takesaki's theorem (\cite{takesaki2013theory}, Theorem IX.4.2.). Notice that Theorem \ref{thm:jones} applies to any of these von Neumann algebras inclusions.

\begin{prop} \label{prop:conditional}
	Let $\mathcal{B} \subseteq \mathcal{A}$ be a subnet of a twisted-local net $\mathcal{A}$. All the normal conditional expectations $\varepsilon_I \colon \mathcal{A}(I) \to \mathcal{B}(I)$ extend to a unique vacuum-preserving conditional expectation $\varepsilon \colon \mathfrak{A} \to \mathfrak{B}$, where   the $C^*$-algebra $\mathfrak{A} $ is the  norm closure of the union of all the local algebras  $\mathcal{A}(I)  $ and similarly for $ \mathfrak{B}$. Then, the projection $e=[\mathfrak{B} \Omega]$ satisfies $e(x \Omega) = \varepsilon(x) \Omega$ and $exe = \varepsilon(x)e$ for all $x$ in $\mathfrak{A}$. 
\end{prop}
\begin{proof}
Since $\mathfrak{B}' = \bigcap_{I \in \mathcal{K}}\mathcal{B}(I)'$,  $e$ belongs to the commutant of all the von Neumann algebras $\mathcal{B}(I)$, and by Theorem \ref{thm:jones} we have that $e(x \Omega) = \varepsilon_I(x) \Omega$ for  $x$ in $\mathcal{A}(I) $ and for each interval $I$ in $\mathcal{K}$. As the vacuum state  $\omega(\cdot) = (\Omega | \cdot \Omega)$ is locally faithful by Reeh-Schlieder, the isotony property for $\mathcal{B}$ implies that the conditional expectations are compatible, namely $\varepsilon_{\tilde{I}}$ extends $\varepsilon_I$ whenever $I \subset \tilde{I}$. Therefore, we can define $\varepsilon(x)  $  for $x$ in  $\bigcup_{I \in \mathcal{K}} \mathcal{A}(I)$ by setting $\varepsilon(x) = \varepsilon_I(x)  $ whenever $x$ is in  $\mathcal{A}(I)$. The map $\varepsilon$ is bounded since every $\varepsilon_I  $ has unital norm, hence  we can continuously extend $\varepsilon$ to $\mathfrak{A}$ (this procedure is also known as the BLT theorem). Finally, $\varepsilon$ is a conditional expectation since by  continuity $\varepsilon$ is a positive $ \mathfrak{B}$-linear projection, and the identity $\omega = \omega \cdot \varepsilon$ follows as well. To prove the last statement, one first shows that $x \Omega \mapsto \varepsilon(x) \Omega$, with $x$ in $\mathfrak{A}$, is a well defined projection onto $e \hil$. Finally, the identity $exe = \varepsilon(x)e$ follows \cite{jones1983index}. 
\end{proof}

We now notice that the Jones projection $e$ onto $\hil_{\mathcal{B}} = \overline{\mathfrak{B}\Omega}$ of the previous proposition commutes not only with all the local algebras  $\mathcal{B}(I)$, but with the unitary representation $U$ as well. Therefore,  we can state that   $\mathcal{B}$ can be naturally identified with a M\"obius covariant net  on $\hil_\mathcal{B} $ together with the  restriction of $U$ on $\hil_{\mathcal{B}} $ and the vacuum vector $\Omega$. Indeed, axioms (a1) and (a2) are verified since $\hil_\mathcal{B}$ is $U$ invariant and axiom (a3) follows by Proposition \ref{prop:conditional}. As a corollary, as noticed in Theorem \ref{thm:jones} we have that  $\mathcal{B}(I)$ is  isomorphic to  $\mathcal{B}(I)e = e\mathcal{A}(I)e$. 

\begin{lem} \label{lem:twist}
	Let $\mathcal{B} \subseteq \mathcal{A}$ be an inclusion  of  M\"obius covariant  nets with Jones projection $e=[\mathfrak{B} \Omega]$. Assume $(\mathcal{A}, U, \Omega)$ to be twisted-local with twist operator $Z_{\mathcal{A}}$. If $eZ_{\mathcal{A}}=Z_{\mathcal{A}}e$, then  $\mathcal{B}$  is twisted-local. 
\end{lem}
\begin{proof}
	We show $Z_{\mathcal{B}} = eZ_{\mathcal{A}}e$ to be a twist operator for $\mathcal{B} $. Clearly $Z_{\mathcal{B}}$ fixes the vacuum vector and commutes with $U$. Then, by Theorem \ref{thm:jones} and twisted locality we have  the chain of  inclusions $Z_{\mathcal{B}} \mathcal{B}(I')Z_{\mathcal{B}}^* =  Z_{\mathcal{B}} \mathcal{A}(I')Z_{\mathcal{B}}^* \subseteq e \mathcal{A}(I)' e \subseteq e \mathcal{B}(I)' e $. The thesis follows.
\end{proof}


\begin{defn} \label{def:ce}
	\cite{longo2021neumann} Let $(\mathcal{A}, U, \Omega)$ be a twisted-local net satisfying the split property. Given a proper inclusion  of open intervals $I \subset \tilde{I}$ of the circle such that $\bar{I} \subset \tilde{I}$, denote by $\mathcal{F} $ the corresponding canonical intermediate type I factor. We define the {\em canonical entanglement entropy} of $\omega$ with respect to the inclusion $I \subset \tilde{I}$ as the von Neumann entropy 
	\begin{equation} \label{eq:cee}
			E_C(\omega) = S_\mathcal{F}(\omega) =  S_{\mathcal{F}'}(\omega) \,.
	\end{equation}
\end{defn}

\begin{thm} \label{thm:conditional}
	Let $(\mathcal{A}, U, \Omega)$ be a twisted-local net on some Hilbert space $\hil$ with twist operator $Z_{\mathcal{A}}$. Consider a subnet  $\mathcal{B} \subseteq \mathcal{A}$ and assume the induced Jones projection $e$ to commute with $Z_\mathcal{A}$. Assume also $\mathcal{A}$ and $\mathcal{B}$ to both  satisfy the split property. Denote then  by  $\mathcal{F}_{\mathcal{A}} $ and $ \mathcal{F}_{\mathcal{B}}$   the   canonical intermediate type I factors associated to the inclusions $\al(I) \subset   \al(\tilde{I})$  and $\mathcal{B}(I) \subset   \mathcal{B}(\tilde{I})$  respectively, with $\bar{I} \subset \tilde{I}$ as above.  If $\varepsilon$ and $e$ are as in Proposition \ref{prop:conditional}, then $\varepsilon(\mathcal{F}_{\mathcal{A}})e = \mathcal{F}_{\mathcal{B}} $. In particular, 
	\begin{equation} \label{eq:inequality}
		S_{\mathcal{F}_{\mathcal{B}}}(\omega) \leq S_{\mathcal{F}_{\mathcal{A}}}(\omega) \,.
	\end{equation}
\end{thm}
\begin{proof}
		We denote by $J_{\mathcal{A}} $ the modular conjugation of $\mathcal{A}(I) \vee \mathcal{A}(\tilde{I})'$ and by $J_{\mathcal{B}}$ the modular conjugation of $\mathcal{B}(I) \vee \mathcal{B}(\tilde{I})'$. By Proposition  \ref{prop:conditional} we have a conditional expectation $\varepsilon \colon \mathcal{A}(\tilde{I}) \to  \mathcal{B}(\tilde{I}) $, but by Lemma \ref{lem:twist} we also have that $e\mathcal{A}({I})'e = Z_\mathcal{A} \mathcal{B}({I}') Z_\mathcal{A}^*e $. Therefore, by Theorem \ref{thm:jones} the map $\varepsilon$ restricts to a conditional expectation $\varepsilon \colon \mathcal{A}({I})' \cap \mathcal{A}(\tilde{I}) \to  Z_\mathcal{A} \mathcal{B}({I}') Z_\mathcal{A}^* \cap \mathcal{B}(\tilde{I}) $, and thanks to the  considerations described in the proof of Corollary \ref{cor:natural}  we can claim that $eJ_{\mathcal{A}}=J_{\mathcal{A}}e = J_{\mathcal{B}}e$. We can now state that the conditional expectation $\varepsilon$ maps $  \mathcal{A}(\tilde{I}) \cap J_\mathcal{A} \mathcal{A}(\tilde{I})J_\mathcal{A}  $ onto $  \mathcal{B}(\tilde{I}) \cap J_{\mathcal{A}} \mathcal{B}(\tilde{I})J_\mathcal{A}  $:  clearly $\varepsilon(\mathcal{A}(\tilde{I})) = \mathcal{B}(\tilde{I})$, but thanks to Theorem \ref{thm:jones} and the  previous remark we also have  $\varepsilon(J_{\mathcal{A}} \mathcal{A}(\tilde{I})J_{\mathcal{A}}) = J_\mathcal{A} \mathcal{B}(\tilde{I})J_\mathcal{A} $. By identity \eqref{eq:canononical-intermediate}, it follows that  $\varepsilon(\mathcal{F}_{\mathcal{A}})e = \mathcal{F}_{\mathcal{B}} $. Finally, inequality \eqref{eq:inequality} is a consequence of assertion (iv) of Theorem \ref{thm:jones} and Corollary \ref{cor:inequality}.
\end{proof}

	Theorem \ref{thm:conditional} is a crucial  technical result of this section. In addition, even though we focused on twisted-local nets on the circle, Proposition \ref{prop:conditional} and Theorem \ref{thm:conditional} easily extend to any inclusion  of double cones $\mathcal{O} \subset \tilde{O}$ in any inclusion $\mathcal{B}\subseteq \mathcal{A}$  of conformal field theories on $\R^{d+1}$, since in CFT the modular group of the double cones local algebras has a geometric implementation and hence a vacuum preserving conditional expectation \cite{hislop1982modular}.

\begin{thm} \label{thm:main}
	Let $(\mathcal{A}, U, \Omega)$  be one of the following twisted-local  nets on the circle:
	
	(i) the free Fermi net,

	(ii) the $U(1)$-current, 
	
	(iii) any $LSU(n)$-conformal net of level $\ell \geq 1$, 
	
	(iv) any Virasoro net with  central charge  given by 
	\begin{equation} \label{eq:central}
		c = \frac{\ell (n^2 -1)}{\ell + n}  \,, 
	\end{equation}
with $\ell \geq 1$ and $n \geq 2$ integers. 

 If $ \mathcal{F} $  is  the  canonical intermediate type I factor given by the inclusion  $\mathcal{A}(I) \subseteq   \mathcal{A}(\tilde{I})$   with $\overline{I} \subseteq \tilde{I}$, then 
	\begin{equation}  \label{eq:finite}
		S_{\mathcal{F}}(\omega) < + \infty \,.
	\end{equation}
\end{thm}
\begin{proof}
	(i) The finiteness property \eqref{eq:finite} has been proved on the free Fermi net  in \cite{longo2021neumann}, where explicit estimates can be found. (ii) In this case we exploit the equivalent definition of the $U(1)$-current conformal net as a subnet of the free Fermi net \cite{longo2021neumann}. Indeed, even though in  the free Fermi net graded locality rather than locality holds, we can still apply Theorem \ref{thm:conditional} because in this case the Jones projection is the identity. (iii) By similarity of the first quantization Hilbert spaces, the same proof provided in \cite{longo2021neumann} for the free Fermi net  can be replicated on  any $LSU(n)$-conformal net of level $\ell =1$  \cite{panebianco2021loop, wassermann1998operator}. The explicit estimates provided in \cite{longo2021neumann} still apply in this setting.   Therefore, since the embedding $LSU(n) \subseteq LSU(n \ell) $ gives rise to all the  $LSU(n)$-conformal nets of level $\ell \geq 1$  \cite{wassermann1998operator}, the estimates of \cite{longo2021neumann} apply to any $LSU(n)$-conformal net. (iv)	Since  $LSU(n)$-conformal nets are local, we can apply Theorem \ref{thm:conditional} to any conformal subnet of these models like  the Virasoro net with central charge given by \eqref{eq:central} \cite{panebianco2021loop}. 
\end{proof}

Theorem \ref{thm:main}, which can be summarized as  a generalization of \cite{longo2021neumann}  by using Theorem \ref{thm:conditional},  is the main result of this work. It is worth noting that the proof exhibited in  \cite{longo2021neumann} heavily depends on the structure of the free Fermi net. However, more generally the finiteness property \eqref{eq:finite} is expected to rely on some nuclearity condition of the system such as the trace-class property \cite{hollands2018entanglement, otani2018toward}. To support this conjecture,  in the next section we provide a few results in this direction.

\section{Modular nuclearity and entanglement measures} \label{sec:4}


In this section we review a few general notions about modular nuclearity conditions of interest for applications to local QFT. We then provide  explicit estimates of the mutual information by assuming a certain modular nuclearity condition to be satisfied. For an axiomatic treatment of entanglement measures and a deeper study of their connection with nuclearity conditions, we refer to \cite{hollands2018entanglement}.

	\subsection{Modular nuclearity conditions} \label{subsec:4.1}

Consider a couple $A$, $B$  of commuting von Neumann algebras on some Hilbert space $\hil$. We shall say that the pair $(A, B)$ is {\em split} if there exists a von Neumann algebra isomorphism $\phi \colon A \vee B \to A \otimes B$ such that $\phi(ab) = a \otimes b$. We will refer to split pairs also as {\em bipartite systems}.  Clearly the spatial tensor product $A \otimes B$ has a natural structure of bipartite system since $ A \cong A \otimes \mathbbm{1}$ and  $B \cong \mathbbm{1} \otimes B$.  If $ A \vee B$ is $\sigma$-finite, then the pair $(A, B) $ is split if and only if for any given normal states $\varphi_A$ on $A$ and $\varphi_B$ on $B$ there exists a normal state $\varphi$ on $A \vee B$ such that  $\varphi (ab) = \varphi_A(a) \varphi_B(b)$ \cite{longo2020lectures}. In local quantum field theory, usually bipartite systems  appear from local algebras associated to causally disjoint and distant spacetime regions \cite{hollands2018entanglement}.  \\

We will say that a split pair $(A, B)$ is {\em standard} if $A$, $B$ and $A \vee B$ are in standard form with respect to some vector $\Omega$. Standard split pairs will also be denoted by $(A,B,\Omega)$. We  set $J_{ A} = J_{A, \Omega}$, $J_B = J_{B, \Omega}$, and similarly $\Delta_A = \Delta_{ A , \Omega}$, $\Delta_B = \Delta_{ B , \Omega}$. As $A \otimes B$ is in standard form with respect to  $\Omega \otimes \Omega$,  the isomorphism $\phi \colon A \vee B \to A \otimes B$ has a {\em standard implementation}, namely is uniquely implemented by some unitary $U$ which maps the natural cone of $A \vee B$ onto the natural cone of $A \otimes B$  \cite{doplicher1984standard}. It can also be shown that $J_{ A} \otimes J_{B} = U J U^{-1} $, with $J = J_{ A \vee B, \Omega}$. The {\em canonical intermediate type I factors} are $F= U^{-1}(B(\hil) \otimes \mathbbm{1})U$ and $F' = U^{-1}( \mathbbm{1} \otimes B(\hil))U$. It can be shown that $F $ is the unique $J$-invariant type I factor $A \subseteq {F} \subseteq B'$, and similarly for $F'$. If $A$ and $B$ are both factors then ${F} = A \vee J A J = B' \cap J B' J$, and therefore ${F'} = B \vee J B J = A' \cap J A' J$ \cite{doplicher1984standard}.  \\

 An inclusion   $N \subseteq M$  of von Neumann algebras  is said to be {\em split} if the pair $(N, M')$ is split.  We shall often pass from a split inclusion to a
 split pair and back. The trivial inclusion $N=M$ is split if and only if $N$ is a type I factor 	\cite{longo2020lectures}. The inclusion  $N \subseteq M$ is said to be {\em standard} if there is a vector $\Omega$ which is standard for $N$, $M$ and the relative commutant $N' \cap M$. If $N \subseteq M$ is a standard split inclusion then each intermediate type I factor $R$ is $\sigma$-finite and hence separable, therefore the Hilbert space $\hil$ has to be separable as $R \Omega$ is dense in $\hil$. If  $N \vee M'$ has a cyclic and separating vector, then the pair $(N, M')$ is split if and only if  there is an intermediate type I factor  $N  \subseteq R \subseteq M$ \cite{longo2020lectures}.

\begin{defn}
	Consider an inclusion $N \subseteq M$ of von Neumann algebras on a Hilbert space $\hil$. Assume the existence of a standard vector $\Omega$ for $M$ and denote by $\Delta$ the corresponding modular operator. We will say that the inclusion $N \subseteq M$  satisfies the {\em modular nuclearity condition} if the map
	\begin{equation} \label{eq:xi}
		\Xi \colon N \to \hil \,, \quad \Xi(x) = \Delta^{1/4} x \Omega \,,
	\end{equation}
	is nuclear. 
\end{defn}

A modular nuclear  inclusion of factors is split, and a split  inclusion of factors implies the map \eqref{eq:xi} to be compact  \cite{buchholz1990nuclear}. This motivates the interest in the split property in local quantum field theory contexts, where the split property amounts to some form of statistical independence between causally disjoint spacetime regions \cite{hollands2018entanglement, lechner2008construction}. \\

The previous nuclearity condition can be easily generalized as follows. Consider a linear map $\Theta \colon \mathcal{E} \to \mathcal{F}$ between Banach spaces. The map $\Theta$ is said to be {\em $p$-nuclear}, where $p>0$, if there are a  sequence of linear functionals $e_i \in \mathcal{E}^*$ and a sequence of elements $f_i$ in $\mathcal{F}$ such that 
\begin{equation}\label{eq:pnc}
			\Theta(x) = \sum_i e_i(x)f_i \,, \quad x \in \mathcal{E}, \quad  \sum_i \Vert e_i \Vert^p \Vert f_i \Vert^p < + \infty \,. 
\end{equation}
The induced quasi-norm, also called {\em $p$-norm}, is given by
\begin{equation*}
	 \Vert \Theta \Vert_p = \inf  \Big( \sum_i \Vert e_i \Vert^p \Vert f_i \Vert^p \Big)^{1/p}  \,,
\end{equation*}
where the infimum is over all  possible representations of $\Theta$ of the form \eqref{eq:pnc}. A $p$-nuclear map is also $q$-nuclear for all $0<p \leq q \leq 1$, and $1$-nuclear maps are nuclear by the very definition. Finally, the above nuclearity condition  can be  rephrased as a {\em modular $p$-nuclearity condition} if the map \eqref{eq:xi} is $p$-nuclear. 

\begin{defn}
	Let $(A,B,\Omega)$ be a standard split pair. Denote by $\Delta_A$ and $\Delta_B$ the corresponding modular operators. We define 
	\begin{equation} \label{eq:sn}
	\Xi_A(a) = \Delta_{B'}^{1/4} a \Omega  	\,, \quad \Xi_B(b) = \Delta_{A'}^{1/4} b \Omega   \,.
	\end{equation}
	with $a$ in $A$ and $b$ in $B$. Given $p>0$ we define the {\em $p$-partition function} as 	
	\begin{equation} \label{eq:nu}
		z_p = \min \{ \Vert \Xi_A \Vert_p , \Vert \Xi_B \Vert_p \}  \,.
	\end{equation}
	We will say that $(A,B,\Omega)$ satisfies the {\em $p$-modular nuclearity condition} if the $p$-partition function is finite. In the case $p=1$ we will simply talk of {\em partition function} and  of {\em modular nuclearity condition}.
\end{defn}

The  $p$-modular nuclearity condition implies the modular nuclearity condition if $p \leq 1$. In order to motivate our definition, we notice that if $(A, B)$ satisfies the  modular nuclearity condition then it is split.

	\begin{defn}
		A local quantum field theory on the Minkowski space is said to satisfy the {\em modular nuclearity condition (for double cones)} if the inclusion  $\mathcal{A}(\mathcal{O}_1) \subseteq \mathcal{A}(\mathcal{O}_2) $ is  modular nuclear  whenever ${\mathcal{O}_1} \subset \mathcal{O}_2$ is an inclusion of double cones such that $\overline{\mathcal{O}_1} \subset \mathcal{O}_2$. If the inclusions $\mathcal{A}(\mathcal{O}_1) \subseteq \mathcal{A}(\mathcal{O}_2) $ are modular $p$-nuclear, then we will say that the {\em modular $p$-nuclearity condition (for double cones)} is satisfied.
	\end{defn}

		This notion of modular nuclearity condition in local quantum field theory is what motivates the previous abstract definitions of modular nuclearity conditions. Nuclearity conditions are mathematical properties ensuring the model to exhibit a regular thermodynamic behaviour. Conformal nets satisfying the trace-class property and the free scalar field are examples of QFT models satisfying the modular $p$-nuclearity condition for any $0<p \leq 1$ \cite{buchholz2007nuclearity, lechner2016modular}. We will further discuss this topic in  \autoref{sec:5}.

	\subsection{Entanglement upper estimates} \label{subsec:4.2}

 A state $\omega$ on a bipartite system $A \otimes B$ is said to be {\em separable} if there are positive normal functionals $\varphi_j$ of $A$ and $\psi_j$ of $B$ such that $\omega = \sum_j \varphi_j \otimes \psi_j$, where the sum is assumed to be norm convergent.  A normal state which is not separable is called {\em entangled}. Separable states are normal and satisfy the following original lemma inspired by \cite{narnhofer2002entanglement}. 
 
 \begin{lem} \label{lem:entropy}
 	Given two von Neumann algebras $A$ and $B$, consider a state  $\omega$  on the bipartite system $A \otimes B$. If $\omega$ is separable, then
 	\[
 	S_A(\omega) = H_\omega^{A \otimes B}(A) \,.
 	\]
 \end{lem}
 
 \begin{proof}  If $\omega = \sum_j \phi_j \otimes \psi_j$ and $\pi$ is the GNS representation of $A$ associated to the marginal state $\omega_A = \omega \vert_A$, then we have an equivalence of GNS representations $\pi \cong  \oplus_j \pi_j$, where $\pi_j$  is the GNS representation of $A$ given by $\psi_j(1) \phi_j$. We can define a cpu map $\varepsilon_j \colon A \otimes B \to \pi_j (A)$ by $\varepsilon_j(a \otimes b) = \pi_j(a) \psi_j(b)/\psi_j(1)$, and this leads us to define a conditional expectation $\varepsilon \colon A \otimes B \to pAp$ by $\varepsilon = \pi^{-1} \cdot \oplus_j \varepsilon_j$, where $p$ is the support projection of $\omega_A$. By Corollary \ref{cor:inequality} we have $S_{pAp}(\omega) = H_\omega^{A \otimes B}(pAp)$. Before concluding this lemma, we notice that $S_A(\varphi \Vert \omega) = S_{p A p}(\varphi \Vert \omega)$ whenever $\varphi$ is a normal state such that $\varphi(p)=1$, which is certainly true for all states $\varphi_i$ appearing in a convex decomposition $\omega = \sum\lambda_i \varphi_i$ (see \cite{ohya2004quantum} for details). This implies $S_A(\omega) = S_{pAp}(\omega) $, and  the thesis follows.
 \end{proof}

\begin{defn} \label{defn:mi}
	The {\em mutual information} $E_I(\omega)$ of a state $\omega$ on $A \otimes B$ is given by 
	\begin{equation} \label{eq:mi0}
			E_I(\omega) = S(\omega \Vert \omega_A \otimes \omega_B) \,.
	\end{equation}
	where $\omega_A= \omega \vert_A$ and similarly for $B$. 
\end{defn}


As motivated by relative entropy's property (r0), the mutual information is a measure of how the state $\omega$ differs from the separable state $\omega_A \otimes \omega_B$. The mutual information is non-negative and independent of the order of $A$ and $B$ \cite{hollands2018entanglement}. For separable states $\omega = \sum_j \lambda_j  \varphi_j \otimes \psi_j$, with $\varphi_j$ and $\psi_j$ normal states, we have
\begin{equation} \label{eq:e1}
	E_I(\omega) \leq \sum_j \eta(\lambda_j) \,,
\end{equation}
with $\eta(t) = -t \ln t$ the information function. Indeed, properties (r0), (r1), (r2) and (r5) imply that
\[
S(\omega \Vert \omega_A \otimes \omega_B) \leq \sum_j \lambda_j S(\varphi_j \otimes \psi_j\Vert \varphi_j \otimes \omega_B) = \sum_j \lambda_j S( \psi_j\Vert \omega_B) \leq \sum_j \lambda_j S( \psi_j\Vert \lambda_j \psi_j) = \sum_j \eta(\lambda_j) \,.
\]
In order to further characterize the mutual information on finite-dimensional bipartite systems, let us consider  a bipartite system given by  $A=B(\hil)$ and $B=B(\hil')$, with $\hil$ and $\hil'$ finite-dimensional Hilbert spaces. In this finite-dimensional setting, the mutual information is finite and given by
\begin{equation} \label{eq:mi}
	E_I(\omega) = S(\omega_A) + S(\omega_B) - S (\omega) \,.
\end{equation}
Furthermore, if $\omega$ is pure then the partials $\omega_A$ and $\omega_B$ have density matrices with equal nonzero eigenvalues, and hence $S(\omega_A) = S(\omega_B)$ (see Lemma 6.4. of \cite{ohya2004quantum} for details). Therefore,
\begin{equation} \label{eq:pure}
	E_I(\omega) = 2S(\omega_A) = 2S(\omega_B) \,.
\end{equation}
We point out that, without any finiteness assumption, on hyperfinite type I factors we can only write $	E_I(\omega) + S (\omega)  = S(\omega_A) + S(\omega_B) $. Other properties of the mutual information can be found in \cite{hollands2018entanglement}. Moreover, we can use \eqref{eq:mi} to deduce the following lemma.

\begin{lem}\label{lem:concave-mutual-info}
	Let $A$ and $B$ be hyperfinite factors and $\omega=\sum_j \lambda_j \omega_j$ be a convex decomposition of a normal state $\omega$ on $A\otimes B$, with $\omega_j $ normal states. Then 
	\begin{align*}
		 E_I (\omega) \leq \sum_j \lambda_j E_I(\omega_j) +2 \sum_j \eta(\lambda_j) \,, \qquad \text{and} \qquad 	\sum_j \lambda_j E_I(\omega_j) \leq E_I (\omega)  + \sum_j \eta(\lambda_j)\,.
	\end{align*}
\end{lem}
\begin{proof}
	This result is true for finite-dimensional type $I$ factors thanks to \eqref{eq:mi} and conditional entropy's property (s2). If $A$ and $B$ are generic hyperfinite factors, then there is a family of bipartite systems of finite-dimensional factors whose union is weakly dense in $A\otimes B$. The statement follows by the approximation property (r4) of the relative entropy.
\end{proof}

In the previous section we proved the canonical entanglement entropy to be finite on some twisted-local nets on the circle. Works on this topic suggest such an entanglement measure  to be  finite by assuming some modular nuclearity condition of the model  such as the trace-class condition \cite{hollands2018entanglement, longo2021neumann}. Even though a general proof on a model independent ground is still lacking, in this section we provide a few  results in this direction. The mathematical strategy, inspired by \cite{hollands2018entanglement}, is very natural and consists in writing the vacuum state $\omega$ as a difference of two separable functionals thanks to some modular nuclearity of the system, and then to use concavity properties typical of entanglement measures to provide explicit estimates. We begin with an easy extension of a lemma contained in  \cite{hollands2018entanglement}.

	\begin{lem}\label{lem:HS-lem3}
		Let $(A, B, \Omega)$ be a standard split pair, with $\Omega$ inducing a state $\omega$. We  assume modular $p$-nuclearity to hold for some $0 < p \leq 1$, namely the $p$-partition function $z_p$ defined in \eqref{eq:nu} is finite for some $0 < p \leq 1$. Given $\epsilon >0$, there are sequences of normal linear functionals $\phi_j$ on $A$ and $\psi_j$ on $B$ such that 
		\begin{equation} \label{eq:sep}
			\omega(ab) = \sum_j \phi_j(a) \psi_j(b) \,, \quad a \in A \,, \; \;  b \in B \,,
		\end{equation}
		and $\sum_j \Vert \phi_j \Vert^p \Vert \psi_j \Vert^p < z_p^p + \epsilon $.
	\end{lem}
	
	\begin{proof}
	 We assume $z_p = \Vert \Xi_A \Vert_p$ and we follow Lemma 3 of \cite{hollands2018entanglement}. Let us consider $\Delta = \Delta_{B', \Omega}$ and $J=J_{B, \Omega}$. Since $J\Delta^{-1/2}$ is the Tomita operator for $B$, for $b$ in $B$ we have
\[
\Delta^{-1/2}b^*\Omega=J(J\Delta^{-1/2})b^*\Omega = Jb\Omega = JbJ\Omega \,.
\]
  Therefore, given $a$ in $A$, we can show that
  \begin{equation*}
			\begin{split}
				\omega(ab) & = (\Omega | ab \Omega) = ((\Delta^{1/4} + \Delta^{-1/4})^{-1}(1+\Delta^{-1/2})b^* \Omega | \Delta^{1/4} a \Omega) \\
				& = ((\Delta^{1/4} + \Delta^{-1/4})^{-1}(b^* + JbJ)\Omega | \Xi_A(a)) \,.
			\end{split}
		\end{equation*}
		If $z_p$ is finite and $\epsilon >0$, then there are sequences of positive normal functionals $\phi_j$ on $A$ and vectors $\xi_j$ in $\hil$ such that
		\[
		\Xi_B(a) = \sum_j \phi_j(a) \xi_j \,, \quad a \in A \,,
		\]
		and $\sum_j \Vert \phi_j \Vert^p \Vert \xi_j \Vert^p < z_p^p + \epsilon$. Define now normal functionals $\psi_j$ on $B$ by
		\begin{center}
			$  			\psi_j(b) = ((\Delta^{1/4} + \Delta^{-1/4})^{-1}(b^* + JbJ)\Omega| \xi_j) \,,  $
		\end{center}
		and note that $\Vert \psi_j \Vert \leq \Vert \xi_j \Vert$ because of the estimate $\Vert (\Delta^{1/4} + \Delta^{-1/4})^{-1} \Vert \leq 1/2$ and the spectral calculus. Putting both paragraphs together we find the conclusion.
	\end{proof}

	Before providing a corollary of the previous lemma, we describe a general procedure known as {\em polarization} of a functional. Let $\omega$ be a continuous functional.  We will say that $\omega$ is {\em self-adjoint} if $\omega = \omega^*$, with $\omega^*(x) = \overline{\omega(x^*)}$ the {\em conjugate} of $\omega$.  By use of   $\omega^*$ one can write $\omega = \phi + i \psi$, with $\phi$ and $\psi$ self-adjoint. Then, after applying a Jordan decomposition on both $\phi$ and $\psi$, we can write $\omega= \sum_{\alpha=0}^3 (i)^\alpha \omega_\alpha$, with $\omega_\alpha$ positive. The inequality $\Vert \omega_\alpha \Vert \leq \Vert \omega \Vert$ can also be proved, and $\omega_\alpha$ are all normal if $\omega$ is.

\begin{cor} \label{cor:four}
	With the notation of the previous lemma, for every $\epsilon > 0$ we can write $\omega = (1+\lambda) \omega_+ - \lambda \omega_-$, where $\omega_\pm$ are separable states and $(1+\lambda)^p \leq 4 (z_p^p + \epsilon)$. 
\end{cor}

\begin{proof}
	By polarization, we can decompose $\phi_j = \sum_{\alpha=0}^3 (i)^\alpha \phi_j^\alpha$ and  $\psi_j = \sum_{\alpha=0}^3 (i)^\alpha  \psi_j^\alpha$ in four positive normal functionals. As mentioned above,  $ \Vert \phi_j^\alpha \Vert \leq \Vert \phi_j \Vert $ holds for any $\alpha$, and similarly for $\psi_j$. Since $\omega$ is positive, then after the identification $A \vee B \cong A \otimes B$ we find 
	\[
	\omega = \sum_j \sum_{\alpha=0}^{3} \phi_j^\alpha \otimes \psi_j^{4 - \alpha}- \sum_j \sum_{\alpha=0}^{3} \phi_j^\alpha  \otimes \psi_j^{2 - \alpha} \,,
	\]
	namely $\omega$ is difference of two separable functionals. The thesis follows.
\end{proof}

	\begin{lem}\label{lem:HS-lem4}
		With the hypotheses of  Lemma \ref{lem:HS-lem3}, assume $\omega$  to have an expression like in \eqref{eq:sep} and assume $\mu_p = \sum_j \Vert \phi_j \Vert^p \Vert \psi_j \Vert^p $ to be finite for some $0 < p \leq 1$. Then there is a separable positive linear functional $\sigma$ such that $\sigma \geq \omega$ on $A \vee B$ and $\Vert \sigma \Vert^p = \mu_1^p \leq \mu_p $. 
	\end{lem}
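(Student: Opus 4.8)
The plan is to turn the absolutely convergent representation $\omega(ab)=\sum_j\phi_j(a)\psi_j(b)$ into a genuine separable \emph{state} that dominates $\omega$, by a two-step procedure: first replace each pair $(\phi_j,\psi_j)$ by positive functionals with controlled norms, then assemble the tensor sum $\sigma=\sum_j\widetilde\phi_j\otimes\widetilde\psi_j$ and verify it is both separable, positive, and $\geq\omega$. Since $A$ and $B$ commute and $A\vee B$ is in standard form with cyclic separating vector $\Omega$, the positivity of $\omega$ as a functional on $A\vee B$ is what we must exploit; the key identity $\sum_j\phi_j(x^*x)\cdot(\text{something in }B)\ldots$ will not be literally available, so instead I would pass through the tensor product picture $A\otimes B$ and use the $\ell^p\hookrightarrow\ell^1$ inclusion.

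Concretely: for each $j$ write the polar-type decomposition of the normal functional $\phi_j$ and $\psi_j$. Since a bounded normal functional on a von Neumann algebra can be written as a linear combination of at most four positive normal functionals whose norms are bounded by (a constant times) $\|\phi_j\|$, I would first reduce to the case where all $\phi_j,\psi_j$ are positive at the cost of replacing the single index $j$ by $j$ together with a finite label; the $\ell^p$-summability is preserved because $p\leq 1$ and there are only finitely many extra terms per $j$. Then, crucially, I would observe that because $0<p\leq1$ one has $\sum_j\|\phi_j\|\,\|\psi_j\|\le\big(\sum_j\|\phi_j\|^p\|\psi_j\|^p\big)^{1/p}=\mu_p^{1/p}$; set $\mu_1:=\sum_j\|\phi_j\|\|\psi_j\|$, so $\mu_1^p\le\mu_p$. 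Now define $\sigma$ on $A\otimes B$ (equivalently on $A\vee B$ via the split isomorphism, but here we just need it as a functional on $A\vee B$ with $\sigma(ab)=\sum_j\phi_j(a)\psi_j(b)$ extended — wait, that is exactly $\omega$) — so the point is subtler: I need $\sigma$ \emph{separable} meaning a norm-convergent sum of product \emph{positive} functionals, and $\sigma\geq\omega$ as functionals on $A\vee B$, with $\|\sigma\|=\mu_1$.

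So the actual construction is: from $\Xi_B(a)=\sum_j\phi_j(a)\xi_j$ one gets, going back to the proof of Lemma~\ref{lem:HS-lem3}, that $\psi_j(b)=\langle\,(\Delta^{1/4}+\Delta^{-1/4})^{-1}(b^*+JbJ)\Omega\,|\,\xi_j\rangle$. These $\psi_j$ need not be positive, but one can replace $\xi_j$ by suitable vectors in the natural cone, or — cleaner — use that $\omega(ab)=\sum_j\phi_j(a)\psi_j(b)$ together with positivity of $\omega$ to write, for the \emph{state} one wants, $\widetilde\psi_j(b^*b)$ controlled by $\|\psi_j\|\,b^*b$ and then take $\sigma:=\sum_j\|\psi_j\|\,\phi_j\otimes(\psi_j/\|\psi_j\|_{\text{normalized to a state-like object}})$. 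I expect the main obstacle to be precisely this: producing from the $\psi_j$ a \emph{positive} functional while keeping (i) the product structure, (ii) the bound $\|\sigma\|=\mu_1$, and (iii) the domination $\sigma\geq\omega$; the inequality $\sigma\geq\omega$ should follow from a Cauchy–Schwarz / completion-of-squares argument applied termwise, using that $\omega(ab)$ is the "cross term" and $\sigma$ collects the diagonal terms, exactly the mechanism by which $\big(\sum_j x_j\big)^*\big(\sum_j x_j\big)\le (\sum_j\lambda_j)\sum_j\lambda_j^{-1}x_j^*x_j$. Once $\sigma\ge\omega$ and $\sigma$ separable with $\|\sigma\|=\mu_1\le\mu_p^{1/p}$ are in hand, the statement $\|\sigma\|^p=\mu_1^p\le\mu_p$ is immediate, and the lemma is proved; the remaining arguments (finiteness of $E_C(\omega)$ via concavity of von Neumann entropy and the bound $S(\omega\Vert\sigma)\leq\log\|\sigma\|$-type estimates together with property (iv) of relative entropy) are then routine and will be carried out in the subsequent proposition.
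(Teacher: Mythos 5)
Your proposal correctly isolates the elementary part of the lemma --- the inclusion $\ell^p\hookrightarrow\ell^1$ for $0<p\leq1$, giving $\mu_1=\sum_j\Vert\phi_j\Vert\,\Vert\psi_j\Vert\leq\mu_p^{1/p}$ and hence $\mu_1^p\leq\mu_p$ --- and it correctly names the real difficulty: manufacturing a \emph{positive} separable functional of norm exactly $\mu_1$ that dominates $\omega$. But you do not overcome that difficulty; you only announce that it ``should follow from a Cauchy--Schwarz / completion-of-squares argument applied termwise,'' which is precisely where the content of the lemma lives. Moreover, the one concrete device you do commit to --- splitting each $\phi_j,\psi_j$ into four positive normal functionals via the Jordan decomposition --- does not yield the statement as claimed: the positive product functionals needed to dominate the hermitian functional $\tfrac12(\phi_j\otimes\psi_j+\bar\phi_j\otimes\bar\psi_j)$ cost an extra factor of $2$ in the norm, so at best you would obtain $\Vert\sigma\Vert\leq 2\mu_1$ rather than the asserted equality $\Vert\sigma\Vert=\mu_1$, and you never actually verify the domination along this route.

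The missing construction is a termwise polar-decomposition symmetrization. Choose partial isometries $u_j\in A$, $v_j\in B$ with $|\phi_j|=\phi_j(u_j\,\cdot\,)\geq0$, $\phi_j=\phi_j(u_ju_j^*\,\cdot\,)$, and likewise for $\psi_j$; one checks $\bar\phi_j(a)=\phi_j(u_jau_j)$. Set $\rho_j=|\phi_j|\otimes|\psi_j|$, $w_j=u_j\otimes v_j$, and
\[
\sigma_j(\,\cdot\,)=\tfrac12\,\rho_j(\,\cdot\,)+\tfrac12\,\rho_j(w_j^*\,\cdot\,w_j)\,,
\]
which is separable with $\Vert\sigma_j\Vert=\rho_j(1)=\Vert\phi_j\Vert\,\Vert\psi_j\Vert$. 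The completion of squares you anticipated is the single positivity statement $\rho_j\bigl((1-w_j)^*\,\cdot\,(1-w_j)\bigr)\geq0$, which expands to $\sigma_j\geq\tfrac12(\phi_j\otimes\psi_j+\bar\phi_j\otimes\bar\psi_j)$; summing over $j$ and using that $\omega$ is positive, hence hermitian, gives $\sigma=\sum_j\sigma_j\geq\tfrac12(\omega+\bar\omega)=\omega$ with $\Vert\sigma\Vert=\mu_1$. Without this (or an equivalent) explicit construction your argument does not close.
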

	
	\begin{proof}
		We follow \cite[Lemma 4]{hollands2018entanglement}. By polar decomposition there are partial isometries $u_j$ in $A$ such that $\phi(u_j \cdot ) \geq 0$ on $A$  and $\phi_j(u_j u_j^* \cdot ) = \phi_j$. It follows in particular that $\phi_j (u_j) = \Vert \phi_j \Vert$ and 
		\begin{center}
			$  		\bar{\phi}_j(a) = \overline{\phi_j(u_j u_j^* a^*)} = \phi_j (u_j (u_j^*a^*)^*) = \phi_j(u_j a u_j)  $
		\end{center}
		for all $a$ in $A$, where we used the fact that $\phi_j(u_j \cdot )$ is hermitian (here $\bar{\psi}(a) = \overline{\psi(a^*)}$). Similarly, there are partial isometries   $v_j$ in $B$ such that $\psi_j (v_j \cdot ) \geq 0$ and $\psi_j (v_j v_j^* \cdot) = \psi_j$. Note that the positive linear functional $\rho_j = \phi_j (u_j \cdot ) \otimes \psi_j (v_j \cdot)$ is separable. Writing $w_j = u_j \otimes v_j$ we then define
		\[
		\sigma_j (\cdot) = \frac{1}{2}\rho_j(\cdot) + \frac{1}{2}\rho_j( w^* \cdot w) \,,
		\]
		which is also separable, because $w$ is a simple tensor product. Furthermore, 
		\begin{center}
			$  	\Vert \sigma_j \Vert = \sigma_j(1) = \rho_j(1) = \Vert \phi_j \Vert  \Vert \psi_j \Vert\,,   $
		\end{center}
		and also
		\[
		0 \leq \frac{1}{2} \rho_j((1-w^*) \cdot (1-w)) = \sigma_j - \frac{1}{2} (\phi_j \otimes \psi_j + \bar{\phi}_j \otimes \bar{\psi}_j ) \,.
		\]
		We conclude that  $\sigma = \sum_j \sigma_j$ is a separable positive linear functional with 
		\[
		\sigma \geq \frac{1}{2} \sum_j  (\phi_j \otimes \psi_j + \bar{\phi}_j \otimes \bar{\psi}_j ) = \frac{1}{2} ({\omega} + \overline{\omega}) = \omega \,. 
		\]
		and $ \Vert \sigma \Vert^p =  \big( \sum_j \Vert \sigma_j \Vert \big)^p \leq  \sum_j \Vert \sigma_j \Vert ^p =  \mu_p $. 
	\end{proof}
	
	Once again, the previous lemma is a slight extension of a lemma proved in \cite{hollands2018entanglement}. 		Notice that, by the two previous lemmas, we have $z_p \geq 1$. We now take advantage of the previous estimates to provide the main theorem of this section.

	\begin{thm} \label{thm:mutual}
		Let $(A, B,\Omega)$ be a standard split pair of hyperfinite factors, with $\Omega$ representing a state $\omega$. Assume the $p$-partition function $z_p$ defined in \eqref{eq:nu} to be finite for some $0 < p <  1$. Then 
		\begin{equation} \label{eq:mt}
			E_I(\omega) \leq c_p z_p + \eta(z_p-1) - \eta(z_p)  \,,
		\end{equation}
		where  $c_p = \frac{1}{(1-p)e}$ and $\eta(t) = -t \ln t$.
	\end{thm}

	\begin{proof}
		We begin the proof by recalling that, as claimed in Lemma \ref{lem:concave-mutual-info}, if   $\omega = \sum_j \lambda_j \omega_j$ is  a finite convex decomposition in states then
		\begin{equation} \label{eq:concave}
		E_I (\omega) \geq \sum_j \lambda_j E_I(\omega_j) - \sum_j \eta(\lambda_j) \,.
		\end{equation}
	Moreover, by  the previous lemmas  for every $\epsilon >0$ we have a separable functional $\sigma \geq \omega$ such that $\Vert \sigma \Vert ^p \leq z_p^p + \epsilon $. By setting $\hat{\sigma} = \sigma / \Vert \sigma \Vert $  we can write $\omega = \Vert \sigma\Vert \hat{\sigma} - \Vert \tau\Vert \hat{\tau}$ and apply \eqref{eq:concave} to notice that
		\[
		\Vert \sigma \Vert E_I(\hat{\sigma}) \geq E_I(\omega) + \eta (\Vert \sigma\Vert) - \eta (\Vert \sigma\Vert - 1) \,,
		\] 
		where we used the positivity  property $E_I(\hat{\tau}) \geq 0$. We then recall that the separability of $\hat{\sigma}$ implies that $E_I(\hat{\sigma}) \leq  \Vert \sigma \Vert ^{-p} c_p (z_p^p + \epsilon) $. Therefore, the claimed estimate follows from the inequality $\eta(t) \leq c_p t^p$ for $p<1$  and  the monotonicity of $\eta(s-1) - \eta(s)$ for $s \geq 1$. 
	\end{proof}

	\begin{remark}
Due to the  inequality $S(\varphi \Vert \omega) \geq \Vert \varphi - \omega \Vert^2/2$ \cite{ohya2004quantum}, we can use the previous result to estimate the distance between $\omega$ and $\omega \otimes \omega$. See \cite{morinelli2018conformal} for related issues regarding the split property.
	\end{remark}

	This upper bound of the mutual information, which by monotonicity becomes sharper for smaller $p$,  has been inspired by the works  \cite{hollands2018entanglement, narnhofer2002entanglement}. Lower bounds on the mutual information follow from Theorem 17 and  Theorem 18 of \cite{hollands2018entanglement}, while an explicit computation has been provided in CFT contexts in \cite{longo2018relative}. We recall that modular $p$-nuclearity for double cones holds on scalar free fields for any $p>0$ \cite{lechner2016modular} and on a wide family of $1+1$-dimensional  models with factorizing S-matrices \cite{alazzawi2017inverse, lechner2008construction, lechner2018approximation}. On conformal nets on the circle satisfying the trace-class property, we also have that modular $p$-nuclearity holds for open intervals inclusions  \cite{buchholz2007nuclearity}. In addition, all these models  satisfy the hyperfiniteness of the local algebras, hence Theorem \ref{thm:mutual} can be applied in these settings. We now follow \cite{otani2018toward} and we show the finiteness of some “tailored” entanglement entropy under the assumption of modular $p$-nuclearity for some $0  < p < 1$. 
	
	\begin{defn} \label{def:ip} Let $(A, B, \Omega)$ be a standard split pair of von Neumann algebras on a Hilbert space $\hil$. If  $u \colon \hil \to \hil \otimes \hil$ is a unitary implementing the natural  isomorphism $A \vee B \cong A \otimes B$, then we denote by $R_u = u^{-1}(B(\hil) \otimes \mathbbm{1} )u$ the induced  type I factor. We call  an {\em intermediate pair} any such pair $(u, R_u)$.
	\end{defn}

	\begin{defn} 
		Let $(A, B,\Omega)$ be a standard split pair as in the previous definition. Given a state $\psi$ on $B(\hil)$, we  call the {\em intermediate entanglement entropy} of $\psi$ the functional
		\begin{equation} \label{eq:iee}
			I(\psi)= \sup_{(u, R_u)} \inf_{\phi, \lambda} \frac{1}{\lambda} S(\phi)\,.
		\end{equation}
		Here the supremum is over all intermediate pairs, the  infimum is over all states $\phi$ on $B(\hil)$ and real numbers $0 < \lambda \leq 1$ such that $\phi \geq \lambda \psi$ on $A \vee B$,  and $S(\phi)$ is the von Neumann entropy of $\phi$ on  $R_u$.
	\end{defn}

	\begin{thm} \label{thm:otani} 	Let $(A, B,\Omega)$ be a standard split pair, with $\Omega$ inducing a state $\omega$. Denote by $z_p$ the $p$-partition function \eqref{eq:nu}.  If $z_p$ is finite for some $0 < p < 1$, then  
		\begin{equation}\label{eq:otani}
			I(\omega)\leq  z_p \ln z_p +  c_p z_p^p  \, .
		\end{equation}
	\end{thm} 
	
	\begin{proof}
		The proof consists of a computation that does not depend on the choice of the intermediate pair, which is therefore implicit in what follows. Through the natural isomorphism $A \vee B \cong A \otimes B$ we will  identify $A$ with $A \otimes \mathbbm{1}$ and $B$ with $\mathbbm{1}\otimes B $. Lemma \ref{lem:HS-lem4} gives a separable dominating normal functional $\sigma  \geq \omega$ with $\Vert \sigma \Vert^p \leq z_p^p + \epsilon$ for $\epsilon >0$ arbitrarily small.  We utilize the separability of $\hat{\sigma} = \sigma / \Vert \sigma \Vert $ over the bipartite system $  A \otimes B $ and decompose it into positive, normal functionals, say  $\hat{\sigma}=\sum_j \phi_j\otimes\psi_j$. Without loss of generality we can assume $\phi_j$ to be states on $A$. Now we notice that $\phi_j \otimes \psi_j $ is a normal positive functional on $A \otimes B$, hence it can be extended by taking a representative vectors. Since such extension has same norm, we can extend $\sigma$ to a separable positive functional on $B(\hil) \otimes B(\hil)$ in such a way that still $\Vert \sigma \Vert^p \leq z_p^p + \epsilon$.  If we now set  $\eta(t) = - t \ln t $ and  $1/c_p = {(1-p)e}$, then
		\begin{equation*} \label{eq:est1}
			\Vert \sigma  \Vert S(\hat{\sigma})  \leq \Vert \sigma  \Vert \ln \Vert \sigma  \Vert + \sum_j \eta (\Vert \psi_j \Vert) \leq \Vert \sigma  \Vert \ln \Vert \sigma  \Vert + c_p \sum_j \Vert \psi_j \Vert^p \,.
		\end{equation*}
		The claimed estimate follows as $\sigma$ is arbitrary. 
	\end{proof}

	\section{Asymptotic in local QFT}  \label{sec:5}

		In this  section we provide a few simple considerations about the asymptotic behaviour of the previously studied entanglement measures, namely the canonical entanglement entropy \eqref{eq:cee}, the mutual information \eqref{eq:mi0}, and the intermediate entanglement entropy \eqref{eq:iee}. \\

Thanks to the results of \cite{hollands2018entanglement}, we can partially answer to a problem proposed in \cite{longo2021neumann}, namely we  show  the canonical entanglement entropy to satisfy a lower bound of area law type.  Let $(\mathcal{A}, U, \Omega)$ be any of the nets on the circle of Theorem \ref{thm:main} except the free Fermi net, so that locality holds. Consider two disjoint and distant open intervals $I$ and $J$ of $S^1 \setminus \{ -1\}$. By stereographic projection we can identify $I$ and $J$ with two open intervals $\tilde{I}$ and $\tilde{J}$ of the real line which are assumed to be bounded \cite{panebianco2021loop}. Set $s = \text{dist}(\tilde{I}, \tilde{J})$ and denote by $E_C(s)$ the corresponding canonical entanglement entropy. If we denote by $E_I(s)$ the mutual information of the vacuum on $\mathcal{A}(I) \vee \mathcal{A}(J)$, then the monotonicity property of the relative entropy implies that $E_I(s) \leq 2 E_C(s)$. Therefore, by Theorem 17 of \cite{hollands2018entanglement} we can claim that 
			\begin{equation} \label{eq:al}
				E_C(s)  \apprge  \frac{D_2}{2}  \cdot \ln \frac{\min(|\tilde{I}|, |\tilde{J}|)}{s} \,,
			\end{equation}
where $D_2$ is a constant explicitly defined in \cite{hollands2018entanglement}. The meaning of the notation above is the following: if $f(s)$ and $g(s)$ are two non-negative real functions depending on a real variable $s > 0$, we will say that $f(s) \apprle g(s)$ as $s \to 0$ if for any $\delta >0 $ there is some $s_0>0$ such that $f(s) \leq (1+\delta)g(s)$ for all $s$ in $(0,s_0)$. In higher dimension, we can consider the 1+1 dimensional chiral CFTs induced by the previous conformal nets as described in \cite{hollands2018entanglement}. On these models the finiteness of the canonical entanglement entropy still holds thanks to equation \eqref{eq:tensor}. We can then exploit Theorem 17 of \cite{hollands2018entanglement} once again  to provide other asymptotic lower estimates which better motivates the connection between the canonical entanglement entropy and area laws typically appearing in such settings. \\

We proceed with a couple of remarks about the asymptotic behaviour of the mutual information and the intermediate entanglement entropy on a bipartite system appearing from two causally disjoint and distant spacetime regions $\mathcal{S}_A$ and $\mathcal{S}_B$ in a local QFT. Usually, some boundedness assumption on at least one region between $\mathcal{S}_A$ and $\mathcal{S}_B$ is a mandatory assumption, since  the split property does not hold for unbounded regions  in more than two spacetime dimensions \cite{buchholz1974product, lechner2008construction}. In two spacetime dimensions instead, a class of integrable QFT models that satisfy modular nuclearity for wedges was constructed in \cite{lechner2008construction}. The only input needed are the mass $m>0$ and a $2$-body scattering matrix $S_2$, even though such a nuclearity property has been proved only if the scattering matrix is of fermionic type and satisfies some regularity assumptions, and if the splitting distance $s>0$ is large enough  \cite{lechner2008construction}. However, if this is the case then  further works prove that  modular $p$-nuclearity  for any $p>0$  for wedges holds as well  \cite{alazzawi2017inverse, lechner2018approximation}. Furthermore, if $\Xi(s)$ is the associated modular operator, then it can be shown that $\Vert \Xi(s) \Vert_p \to 1$ as $s \to + \infty$. We summarize these results in the following theorem.

	\begin{thm} \label{thm:gandalf} \cite{alazzawi2017inverse, lechner2008construction, lechner2018approximation} Let $(\mathcal{A}, U, \Omega)$ be an integrable quantum field theory on $\R^2$ with fermionic and regular factorizing $S$-matrix $S_2$. Define
		\begin{equation} \label{eq:nuclear}
			\Xi (s) \colon \mathcal{A}(W_R)  \to \hil \,, \quad \Xi(s)A = \Delta^{1/4} U({\bf s}) A \Omega \,, \quad s>0 \,,
		\end{equation}
		where $W_R$ is the right wedge  $x_1 > |x_0|$, $U({\bf s})$ is the unitary associated to the translation of ${\bf s} = (0,s)$ and $\Delta$ is the modular operator of $(\mathcal{A}(W_R), \Omega)$. Then, there exist some finite  splitting distance $s_{\text{min}} < \infty$ such that $\Xi(s) $ is $p$-nuclear for all $p>0$ and $s > s_{\text{min}} $. In addition,  $\Vert \Xi(s) \Vert_p \to 1$ as $s \to + \infty$.
	\end{thm}

This theorem tells us that the embedding $\mathcal{A}(W_R + {\bf s}) \subseteq \mathcal{A}(W_R)$ satisfies modular $p$-nuclearity for all $p>0$ if the splitting distance $s>0$ is sufficiently large.  Therefore, if we denote by $E_I(s)$ the mutual information associated to the left wedge $W_L=W_R'$ and to $W_R + {\bf s}$, by  applying the upper bound \eqref{eq:mt} for any $p>0$  we can state that
		\[
		\limsup_{s \to + \infty} E_I(s) \leq 1/e \,.
		\]
		By monotonicity, the same asymptotic upper estimate holds for any couple of double cones contained in $W_L$ and $W_R + {\bf s}$ respectively. Analogously, we can apply \eqref{eq:otani} to estimate the intermediate entanglement entropy in the limit $s \to + \infty$. With the same arguments and similar notation, we find
				\[
		\limsup_{s \to + \infty} I(s) \leq 1/e \,.
		\]

	
	

	\section{Outlook} \label{sec:6}
	
	We close this paper with a miscellanea of additional remarks that might be useful for future research in this area. In particular, we list a few conditions which are equivalent to the finiteness of the canonical entanglement entropy. \\

	The techniques of  \autoref{sec:5} rely on the presence of a  separable state $\sigma$ on the bipartite system $A\otimes B $ that dominates $\omega$. If $F$ is an intermediate type I factor $A\subseteq F \subseteq B'$ arising from the natural isomorphism $A \vee B \cong A \otimes B $ as in Definition \ref{def:ip}, then it  is possible to construct a separable functional on $  B(\hil) \otimes B(\hil) \cong F \vee F' $  that dominates $\omega$ on $F $ and on $F'$ by use of generalized conditional expectations \cite{accardi1982conditional}. More specifically, let $(A,B,\Omega)$ be a standard split pair with finite $p$-partition function for some $0<p < 1$. As discussed in  \cite{accardi1982conditional, ohya2004quantum}, one has two $\omega$-preserving cpu maps, say $\varepsilon$  and $\varepsilon'$, induced by the inclusions  $A\subseteq F$ and $B \subseteq  F'$ respectively.  By the isomorphism $  B(\hil) = F \vee F' \cong B(\hil) \otimes B(\hil)  $ we can then define a cpu map $\gamma$ on $ B(\hil) $ such that $\gamma(xy)=\varepsilon(x)\varepsilon'(y)$ for $x$ in $F$ and $y$ in $F'$. If $\sigma$ is the dominating separable functional from Lemma \ref{lem:HS-lem4}, then ${\sigma}_0=\sigma \cdot \gamma$ dominates $ {\omega}_0= \omega \cdot  \gamma$. Notice that $\omega = {\omega_0} $ on $F$ and on $F'$, but generally not on $B(\hil)$. The functional ${\sigma_0} = \sum_j \phi_j\cdot\varepsilon \otimes  \psi_j\cdot\varepsilon'$   is separable with $\sum_j \lVert\phi_j\cdot\varepsilon\rVert^p \lVert\psi_j\cdot\varepsilon'\rVert^p=\mu_p  $ finite  (cf. Lemma \ref{lem:HS-lem4} for notation), and by proceeding as in Theorem~\ref{thm:mutual} we have 
	\[
	E_I({\omega}_0) = S({\omega_0}\Vert \omega_F \otimes \omega_{F'}) \leq c_p z_p + \eta(z_p-1) - \eta(z_p)  \,,
	\]
	where the r.h.s. is finite by assumption. Unfortunately, this does not imply the finiteness of the canonical entanglement entropy since ${\omega_0}$ is not a pure state on $B(\hil)$. But we can make use of generalized conditional expectations to give an equivalent description of the canonical entanglement entropy. In particular, by use of equation \eqref{eq:pure} and Lemma \ref{lem:entropy} we can claim  that 
	\[
	 S_{B(\hil)}(\omega \Vert \omega_F \otimes \omega_{F'})= 2 E_C(\omega)   = 2 H_{{\omega_0}} (F)  \,,
	\]
	with $H_{{\omega}_0} (F) = H_{{\omega}_0}^{B(\hil)} (F) $ the conditional entropy. The authors of \cite{dutta2021canonical} argued on grounds of physical arguments that
	\[
	E_C(\omega) \approx E_I^{F \vee B'}(\omega) = S(\omega \Vert \omega_F \otimes \omega_{B'}) \,,
	\] 
	and indeed it is reasonable to expect that the results of this work can be properly strengthened. For example, Theorem \ref{thm:mutual} implies that $E_I^{F \vee B'}(\omega)$ is finite if the $\omega$-preserving generalized conditional expectation from $F \vee  B' $ onto $ A\vee B'$ is a separable operation in the terminology of  \cite{hollands2018entanglement}. Another  strategy could be that of estimating the entanglement entropy of some energy cutoff of the vacuum state like  in \cite{otani2018toward} and then to operate some limit procedure.  A different approach is the one of \cite{longo2021neumann}, in which the authors use the language of standard subspaces. Unfortunately, even if completely rigorous, this last work heavily depends on the structure of the free Fermi nets, and a generalization of it seems quite challenging up to now. However, in the context of conformal nets the authors expect the trace-class property to be a good assumption to start with \cite{longo2020lectures, otani2018toward}.  \\
	
	{\bf Acknowledgements } 	We thank Roberto Longo for suggesting us the problem. We also thank Yoh Tanimoto and Gandalf Lechner for explanations concerning the last section and Yoh Tanimoto, Gandalf Lechner and Ko Sanders for comments on an earlier version. \\
	
		{\bf Fundings } 	BW is part of the INdAM Doctoral programme in Mathematics and/or Applications cofunded by  Marie  Sklodowska-Curie actions (INdAM-DP-COFUND 2015) and has received funding from the European Union's 2020 research and innovation programme under the Marie Sklodowska-Curie grant agreement No 713485. \\
		
					{\bf  Data availability} Data sharing is not applicable to this article as no datasets were generated or analyzed during the current study. \\
		
			{\bf Conflict of interest } On behalf of all authors, the corresponding author states that there is no conflict of interest.

\end{document}